\providecommand{\U}[1]{\protect\rule{.1in}{.1in}}
\newtheorem{theorem}{Theorem}
\newtheorem{lemma}[theorem]{Lemma}
\newenvironment{proof}[1][Proof]{\noindent\textbf{#1.} }{\ \rule{0.5em}{0.5em}}
\begin{document}
\preprint{UMTG-10}
\title[GBI]{Multi-operator brackets acting thrice}
\author{Thomas Curtright, Xiang Jin, and Luca Mezincescu}
\affiliation{Department of Physics, University of Miami, Coral Gables, FL 33124-8046,
USA\medskip}
\keywords{one two three}
\pacs{PACS number}

\begin{abstract}
\medskip

We generalize an identity first found by Bremner for Nambu 3-brackets. \ For
odd $N$-brackets built from associative operator products, we show that
\[
\left[  \left[  A\left[  B_{1}\cdots B_{N}\right]  B_{N+1}\cdots
B_{2N-2}\right]  B_{2N-1}\cdots B_{3N-3}\right]  =\left[  \left[  AB_{1}\cdots
B_{N-1}\right]  \left[  B_{N}\cdots B_{2N-1}\right]  B_{2N}\cdots
B_{3N-3}\right]
\]
for any fixed $A$, when totally antisymmetrized over all the $B$s.

\end{abstract}
\volumeyear{year}
\volumenumber{number}
\issuenumber{number}
\eid{identifier}
\startpage{1}
\endpage{ }
\maketitle

\section{Introduction}

Nambu introduced a multilinear operator bracket in the context of a novel
formulation of mechanics \cite{Nambu}. \ His $N$-bracket is defined by
\begin{equation}
\left[  A_{1}A_{2}\cdots A_{N}\right]  =\sum_{\sigma\in S_{N}}%
\operatorname*{sgn}\left(  \sigma\right)  ~A_{\sigma_{1}}\cdots A_{\sigma_{N}%
}\ ,
\end{equation}
where the sum is over all $N!$ permutations of the operators. \ For example,
$\left[  ABC\right]  =ABC-ACB+BCA-BAC+CAB-CBA$. \ The operator product is
assumed to be associative. \ To avoid ambiguities when some of the entries
within a bracket are themselves products, commas are often used to separate
the entries. \ Parentheses also suffice in such cases. \ For example, $\left[
AD,B,C\right]  =\left[  \left(  AD\right)  BC\right]
=ADBC-ADCB+BCAD-BADC+CADB-CBAD$.

The same construction independently appeared in the mathematical literature
\cite{Higgins,Kurosh}. \ The theory of such multi-operator products, as well
as their \textquotedblleft classical limits\textquotedblright\ in terms of
multivariable Jacobians, has been studied extensively
\cite{Bremner,CFJMZ,Curtright,deAzcarraga,Dito,Filippov,Gautheron,Hanlon,Lada,Pojidaev,Schlesinger,Takhtajan,Vainerman,Vaisman}%
. \ 

From an algebraic point of view, it is natural to seek the analogue of the
Jacobi identity for $N$-brackets. \ For the case of even $N$-brackets, the
obvious generalization where one $N$-bracket acts on another leads to a true
identity \emph{if} all entries are totally antisymmetrized (see (\ref{EvenGJI}%
) below). \ But for odd $N$-brackets this procedure does not work
\cite{Bremner,deAzcarraga} -- the total antisymmetrization over all entries of
one odd $N$-bracket acting on another does not vanish, but rather yields a
higher-order $\left(  2N-1\right)  $-bracket. \ 

Nevertheless, an interesting generalization of the Jacobi identity was
discovered by Bremner for $3$-brackets acting thrice \cite{Bremner}. \ He
showed
\begin{equation}
\left[  \left[  A\left[  bcd\right]  e\right]  fg\right]  =\left[  \left[
Abc\right]  \left[  def\right]  g\right]  \ , \label{B}%
\end{equation}
where $A$ is fixed, but it is implicitly understood that lower case entries
are totally antisymmetrized by summing over all $6!$ signed permutations of
them. \ The point of this short paper is to show that Bremner's identity
generalizes to all odd $N$-brackets. \ 

Before discussing the general case, we anticipate and indicate a proof for the
case of $3$-brackets. \ The Bremner identity can be proved through a
resolution of both left- and right-hand sides as a series of canonically
ordered words. \ By direct calculation we find%
\begin{align}
\left[  \left[  A\left[  bcd\right]  e\right]  fg\right]   &
=24~Abcdefg-36~bAcdefg+36~bcAdefg-24~bcdAefg\nonumber\\
&  +36~bcdeAfg-36~bcdefAg+24~bcdefgA\ ,
\end{align}
where all lower case entries are implicitly totally antisymmetrized.
\ Precisely the same expansion holds for $\left[  \left[  Abc\right]  \left[
def\right]  g\right]  $, again by direct calculation. \ Hence the identity is
established. \ 

That is to say, both $\left[  \left[  A\left[  bcd\right]  e\right]
fg\right]  $ and $\left[  \left[  Abc\right]  \left[  def\right]  g\right]  $
can be rendered as a 7-bracket plus another 3-bracket containing 3-brackets,
when antisymmetrized over lower case entries.%
\begin{equation}
\left[  \left[  A\left[  bcd\right]  e\right]  fg\right]  =\frac{1}%
{20}~\left[  Abcdefg\right]  -\frac{1}{6}~\left[  A\left[  bcd\right]  \left[
efg\right]  \right]  =\left[  \left[  Abc\right]  \left[  def\right]
g\right]  \ .
\end{equation}
Thus the Bremner identity amounts to the combinatorial statement, as written,
that there are two distinct ways to present a 7-bracket in terms of nested 3-brackets.

\section{Results for any N}

As known, and previously mentioned, even brackets need only act twice to yield
an identity. \ Namely \cite{deAzcarraga,Hanlon},
\begin{equation}
\left[  B_{1}\cdots B_{N-1}\left[  B_{N}\cdots B_{2N-1}\right]  \right]
=0\text{ \ \ for }N\text{ even.} \label{EvenGJI}%
\end{equation}
Total antisymmetrization of all the $B$s is understood \cite{Footnote1}.
\ When $N=2$ this is the familiar Jacobi identity. \ The proof is by direct
calculation and follows as a consequence of associativity.

However, for odd $N$, $\left[  B_{1}\cdots B_{N-1}\left[  B_{N}\cdots
B_{2N-1}\right]  \right]  \neq0$, but instead produces the $\left(
2N-1\right)  $-bracket $\left[  B_{1}\cdots B_{2N-1}\right]  $ upon total
antisymmetrization \cite{deAzcarraga,Curtright}. \ Apparently, the simplest
identity obeyed by odd brackets of only one type, that does \emph{not}
introduce higher-order brackets, requires that they act at least thrice. \ For
any odd $N=2L+1$, a valid relation is the immediate generalization of that
found by Bremner for the case of $3$-brackets. \ To show this, we present two
easily established lemmata, followed by our main theorem and its proof. \ Firstly,

\begin{lemma}%
\begin{equation}
\left[  AB_{1}\cdots B_{J}\right]  =J!\sum_{j=0}^{J}\left(  -1\right)
^{j}B_{1}\cdots B_{j}AB_{j+1}\cdots B_{J}\ . \label{Lemma1}%
\end{equation}

\end{lemma}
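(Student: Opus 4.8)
The plan is to prove the identity by expanding the left-hand side directly from the definition of the $\left(J+1\right)$-bracket and then sorting the resulting terms according to the position of the fixed entry $A$. Writing $\left[AB_{1}\cdots B_{J}\right]$ as the signed sum over all $\left(J+1\right)!$ permutations of the entries $A,B_{1},\ldots,B_{J}$, I would group those permutations by the number $j$ of $B$s that happen to lie to the left of $A$, with $j$ running over $0,1,\ldots,J$.

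The \emph{key step} is a factorization of the sign of each permutation. A generic word in the sum has the form $B_{\tau\left(1\right)}\cdots B_{\tau\left(j\right)}AB_{\tau\left(j+1\right)}\cdots B_{\tau\left(J\right)}$ for some $\tau\in S_{J}$. Relative to the word $AB_{\tau\left(1\right)}\cdots B_{\tau\left(J\right)}$, which carries sign $\operatorname{sgn}\left(\tau\right)$, this word is obtained by sliding $A$ rightward past the $j$ leading $B$s, i.e.\ by $j$ adjacent transpositions, contributing an extra factor $\left(-1\right)^{j}$. Hence every permutation with $A$ in this slot has sign $\left(-1\right)^{j}\operatorname{sgn}\left(\tau\right)$, and the assignment $\left(j,\tau\right)\mapsto\text{word}$ is a bijection onto all $\left(J+1\right)!$ permutations counted exactly once. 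This yields
\[
\left[AB_{1}\cdots B_{J}\right]=\sum_{j=0}^{J}\left(-1\right)^{j}\sum_{\tau\in S_{J}}\operatorname{sgn}\left(\tau\right)B_{\tau\left(1\right)}\cdots B_{\tau\left(j\right)}AB_{\tau\left(j+1\right)}\cdots B_{\tau\left(J\right)}.
\]

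Finally, I would identify the inner sum over $\tau$ as precisely the total antisymmetrization of the $B$s applied to the ordered representative $B_{1}\cdots B_{j}AB_{j+1}\cdots B_{J}$. Under the paper's standing convention that the displayed lower entries are implicitly totally antisymmetrized, this inner sum equals $J!$ times the written term, which is exactly what supplies the overall $J!$ prefactor and completes the proof. I expect the only delicate point to be the sign bookkeeping in the factorization step, together with pinning down the normalization of the implicit antisymmetrization so that the factor $J!$ lands correctly; to guard against an off-by-$J!$ slip I would verify the formula explicitly in the small cases $J=1$, where $\left[AB_{1}\right]=AB_{1}-B_{1}A$, and $J=2$, where the six-term three-bracket expansion can be matched against the antisymmetrized right-hand side term by term.
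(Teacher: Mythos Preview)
Your approach is correct and is exactly the natural direct argument; the paper itself does not supply a proof of this lemma, merely calling it ``easily established.''  One small clarification on your final normalization step: under the paper's convention the implicit antisymmetrization is a \emph{sum} over $S_{J}$ with signs (see the footnote spelling this out), not an average, so your inner sum over $\tau$ is \emph{equal to} the displayed term $B_{1}\cdots B_{j}AB_{j+1}\cdots B_{J}$ under that convention, not $J!$ times it.  The factor $J!$ in the lemma instead arises on the \emph{left}-hand side: applying the same implicit antisymmetrization to $[AB_{1}\cdots B_{J}]$, which is already totally antisymmetric in the $B$s, simply multiplies it by $J!$.  Cancelling that common factor recovers precisely the identity you derived, so the argument goes through unchanged; your planned $J=1,2$ checks would have resolved this bookkeeping point.
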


\noindent Total antisymmetrization of the $B$s is understood. \ Here we have
also used the convention that an empty product equals 1. \ Explicitly,
$B_{1}\cdots B_{0}=1=B_{J+1}\cdots B_{J}$, so that the first and last terms in
the sum are $\mathcal{A}B_{1}\cdots B_{J}$ and $\left(  -1\right)  ^{J}%
B_{1}\cdots B_{J}\mathcal{A}$, respectively. \ It is a simple exercise to use
this first lemma to prove (\ref{EvenGJI}). \ Similarly,

\begin{lemma}%
\begin{align}
\left[  \mathcal{A}B_{1}\cdots B_{J}\mathcal{Z}\right]   &  =J!\sum_{j=0}%
^{J}\left(  -1\right)  ^{j}\sum_{k=0}^{J-j}\left(  -1\right)  ^{k}B_{1}\cdots
B_{k}\mathcal{A}B_{k+1}\cdots B_{J-j}\mathcal{Z}B_{J-j+1}\cdots B_{J}%
\nonumber\\
&  -J!\sum_{j=0}^{J}\left(  -1\right)  ^{j}\sum_{k=0}^{J-j}\left(  -1\right)
^{k}B_{1}\cdots B_{k}\mathcal{Z}B_{k+1}\cdots B_{J-j}\mathcal{A}%
B_{J-j+1}\cdots B_{J}\ . \label{Lemma2}%
\end{align}

\end{lemma}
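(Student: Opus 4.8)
The plan is to prove Lemma 2 by reducing it to Lemma 1, applied twice. The key observation is that the bracket $[\mathcal{A}B_1\cdots B_J\mathcal{Z}]$ contains two distinguished entries $\mathcal{A}$ and $\mathcal{Z}$ among the $J$ totally antisymmetrized $B$s. Since the $B$s are antisymmetrized but $\mathcal{A}$ and $\mathcal{Z}$ are not, I would treat $\mathcal{A}$ and $\mathcal{Z}$ on a different footing than in Lemma 1, which has only one distinguished entry.

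First I would regard $\mathcal{Z}$ as the single distinguished entry and apply Lemma~\ref{Lemma1} with the understanding that the remaining $J+1$ letters $\mathcal{A}B_1\cdots B_J$ play the role of the antisymmetrized arguments. Concretely, I expect to write
\begin{equation}
[\mathcal{A}B_1\cdots B_J\mathcal{Z}] = J!\sum_{j=0}^{J}(-1)^{j}\,[\mathcal{A}B_1\cdots B_{J-j}]\,\mathcal{Z}\,B_{J-j+1}\cdots B_J - (\text{the }\mathcal{A}\leftrightarrow\mathcal{Z}\text{ term}),
\end{equation}
so that $\mathcal{Z}$ is pulled out to each of the $J+1$ possible slots among the antisymmetrized $B$s with the appropriate alternating sign, while $\mathcal{A}$ rides along inside the shorter bracket on its left. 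The antisymmetry forces the coefficient $J!/(\text{remaining count})!$ type factors to organize into the single overall $J!$; I would check the factorials carefully here since the number of antisymmetrized letters changes as $\mathcal{Z}$ moves. The second term, with $\mathcal{A}$ and $\mathcal{Z}$ exchanged, arises because $\mathcal{A}$ itself is not symmetric with the $B$s, so the naive single application must be antisymmetrized in the pair $\{\mathcal{A},\mathcal{Z}\}$ to respect the full bracket structure.

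Next I would apply Lemma~\ref{Lemma1} a second time to each inner bracket $[\mathcal{A}B_1\cdots B_{J-j}]$, expanding it as $(J-j)!\sum_{k=0}^{J-j}(-1)^{k}B_1\cdots B_k\,\mathcal{A}\,B_{k+1}\cdots B_{J-j}$. Substituting this into the intermediate expression and absorbing the factor $(J-j)!$ against the outer $J!$ should reproduce exactly the double sum in \eqref{Lemma2}, with the inner index $k$ running over the placements of $\mathcal{A}$ and the outer index $j$ governing the placement of $\mathcal{Z}$. The two sums in the statement of Lemma~\ref{Lemma2} then correspond precisely to the two terms from the $\mathcal{A}\leftrightarrow\mathcal{Z}$ antisymmetrization in the previous step.

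The main obstacle I anticipate is bookkeeping rather than conceptual: correctly tracking how the factorial normalization shifts when the number of genuinely antisymmetrized letters decreases as $\mathcal{Z}$ and then $\mathcal{A}$ are extracted, and verifying that the relative sign between the two double sums is exactly $-1$ as written. I would confirm the result on a small case, say $J=1$ or $J=2$, by direct expansion to make sure the signs, the empty-product convention $B_1\cdots B_0 = 1$, and the range of the inner sum $\sum_{k=0}^{J-j}$ all match the claimed formula before trusting the general manipulation.
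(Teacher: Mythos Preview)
Your strategy --- locate $\mathcal{Z}$ first, then $\mathcal{A}$ --- is the right idea, and it is essentially what the paper intends by its one-word proof ``Similarly'': one classifies each word in the expansion of the $(J{+}2)$-bracket by the positions of the two distinguished letters, just as Lemma~1 classifies by the single position of $A$.  The paper does this in one stroke rather than as two nested applications of Lemma~1.

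There is, however, a concrete gap in your intermediate formula
\[
[\mathcal{A}B_1\cdots B_J\mathcal{Z}] \;=\; J!\sum_{j=0}^{J}(-1)^{j}\,[\mathcal{A}B_1\cdots B_{J-j}]\,\mathcal{Z}\,B_{J-j+1}\cdots B_J \;-\; (\mathcal{A}\leftrightarrow\mathcal{Z}).
\]
Each $j$-term is off by a factor of $(J-j)!$.  The inner bracket $[\mathcal{A}B_1\cdots B_{J-j}]$ is already antisymmetric in its $J-j$ arguments, so the implicit antisymmetrization over \emph{all} $J$ $B$s overcounts it by $(J-j)!$.  For $J=2$, $j=0$ your right-hand side produces $J!\cdot(J{-}j)!=4$ copies of $[\mathcal{A}B_1B_2]\mathcal{Z}$, whereas the left-hand side $2!\,[\mathcal{A}B_1B_2\mathcal{Z}]$ contains only $2$.  (The $J=1$ check you proposed is too small to see this, since $(J-j)!\in\{0!,1!\}$ there.)  Consequently the extra $(J-j)!$ coming from the second use of Lemma~1 does \emph{not} ``absorb against the outer $J!$''; rather it must cancel this overcount.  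If you replace the prefactor $J!$ in your intermediate step by $J!/(J-j)!$, or simply skip the inner bracket and record the positions of $\mathcal{A}$ and $\mathcal{Z}$ directly among ordered words (which is the paper's ``similarly''), the double sum of Lemma~2 falls out with the correct normalization.
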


\noindent Finally, it is rather tedious but fairly straightforward to use both
lemmata to prove the following.

\begin{theorem}
For associative products, with implicit total antisymmetrization of the $B$s,%
\begin{equation}
\left[  \left[  A\left[  B_{1}\cdots B_{2L+1}\right]  B_{2L+2}\cdots
B_{4L}\right]  B_{4L+1}\cdots B_{6L}\right]  =\left[  \left[  AB_{1}\cdots
B_{2L}\right]  \left[  B_{2L+1}\cdots B_{4L+1}\right]  B_{4L+2}\cdots
B_{6L}\right]  \ . \label{GBI}%
\end{equation}

\end{theorem}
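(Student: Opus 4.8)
The plan is to resolve each side of~(\ref{GBI}) into a common basis of canonically ordered words and then match coefficients, exactly as was done for $N=3$ in the Introduction. On either side a single fixed operator $A$ sits among the $6L$ totally antisymmetrized operators $B_1,\dots,B_{6L}$, so after antisymmetrization every surviving monomial must be, up to sign, one of the $6L+1$ words
\[
W_p=B_1\cdots B_p\,A\,B_{p+1}\cdots B_{6L},\qquad p=0,1,\dots,6L ,
\]
with the $B$s antisymmetrized. Hence each side equals $\sum_{p=0}^{6L}c_p\,W_p$ for integers $c_p$, and the theorem reduces to showing that the coefficient of $W_p$ agrees on the two sides for every $p$.

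To extract the $c_p$ I would peel the brackets one layer at a time, using three tools: the elementary fact that a bracket built solely from antisymmetrized $B$s collapses to a factorial times its canonical block (for instance $[B_1\cdots B_{2L+1}]=(2L+1)!\,B_1\cdots B_{2L+1}$), the first lemma~(\ref{Lemma1}) to dispose of a layer carrying a single distinguished entry, and the second lemma~(\ref{Lemma2}) to dispose of a layer carrying two distinguished entries. On the left-hand side the innermost bracket collapses by the factorial rule; the middle bracket $[\,A\,[B_1\cdots B_{2L+1}]\,B_{2L+2}\cdots B_{4L}\,]$ is then reduced by~(\ref{Lemma2}) with distinguished entries $\mathcal A=A$ and $\mathcal Z=[B_1\cdots B_{2L+1}]$ and $J=2L-1$; finally the outer bracket, which appends the remaining $2L$ operators $B_{4L+1},\dots,B_{6L}$, is reduced by~(\ref{Lemma1}) with the resulting word as the single distinguished entry, after which the overall antisymmetrization merges all indices into the words $W_p$. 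On the right-hand side the bracket $[A\,B_1\cdots B_{2L}]$ is reduced by~(\ref{Lemma1}) (distinguished entry $A$, $J=2L$), the bracket $[B_{2L+1}\cdots B_{4L+1}]$ collapses by the factorial rule, and the outer bracket is reduced by~(\ref{Lemma2}) with the two distinguished entries and $J=2L-1$. Each such step is legitimate even though the $B$s inside a subbracket are ultimately antisymmetrized against those outside it: antisymmetrization over the full index set factors through antisymmetrization over any subset, so a lemma valid under antisymmetrization of a bracket's own $B$s survives the global antisymmetrization, provided the attendant factorial multiplicities are carried along.

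Comparing coefficients is where the work concentrates, and this is the step I expect to be the main obstacle. The two reductions deliver $c_p$ as different-looking signed, factorial-weighted finite sums---a double sum coming from feeding~(\ref{Lemma2}) into~(\ref{Lemma1}) on the left, and the analogous nested sum on the right---and the one genuinely delicate point is to prove that these coincide for each $p$. This should amount to a binomial/Vandermonde-type identity among the signed factorial weights, which is the place where the associativity underlying both lemmas does its work. As a guide and consistency check, the explicit $L=1$ computation in the Introduction gives the palindromic string $(24,-36,36,-24,36,-36,24)$ on both sides; the general calculation should reproduce a signed palindromic pattern $c_p=(-1)^p m_p$ with $m_p=m_{6L-p}$, and pinning down the closed form of $m_p$ from each presentation is precisely what makes the proof tedious but straightforward.
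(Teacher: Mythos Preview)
Your proposal is correct and follows essentially the same route as the paper: both sides are resolved into the canonically ordered words $W_p$ via the factorial collapse of the pure-$B$ bracket together with Lemmas~(\ref{Lemma1}) and~(\ref{Lemma2}), after which the coefficients are matched by direct evaluation of the resulting nested sums. The paper's computation shows these sums are elementary countings (not genuine Vandermonde-type identities) that yield the closed form~(\ref{Cn}) and the reflection $m_n=m_{6L-n}$ you anticipated.
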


\begin{proof}
[Proof of Theorem]The result (\ref{GBI}) follows from resolving the left- and
right-hand sides into sums of canonically ordered words, as illustrated above
for the case of $3$-brackets. \ We have
\begin{gather}
\left[  \left[  AB_{1}\cdots B_{2L}\right]  \left[  B_{2L+1}\cdots
B_{4L+1}\right]  B_{4L+2}\cdots B_{6L}\right]  =\sum_{n=0}^{6L}\left(
-1\right)  ^{n}m_{n}^{\left(  1\right)  }~B_{1}\cdots B_{n}~A~B_{n+1}\cdots
B_{6L}\ ,\nonumber\\
\left[  \left[  A\left[  B_{1}\cdots B_{2L+1}\right]  B_{2L+2}\cdots
B_{4L}\right]  B_{4L+1}\cdots B_{6L}\right]  =\sum_{n=0}^{6L}\left(
-1\right)  ^{n}m_{n}^{\left(  2\right)  }~B_{1}\cdots B_{n}~A~B_{n+1}\cdots
B_{6L}\ .
\end{gather}
All the coefficients $m_{n}^{\left(  1,2\right)  }$ in these two resolutions
are manifestly positive integers. \ The theorem is established by showing that
$m_{n}^{\left(  1\right)  }=m_{n}^{\left(  2\right)  }$ for all $n$.

By direct calculation, through the use of the two lemmata, we find
\begin{align}
m_{n}^{\left(  1\right)  }  &  =m_{n}^{\left(  2\right)  }=\left(
2L+1\right)  !\left(  2L\right)  !\left(  2L-1\right)  !\times c_{n}%
\ ,\label{Mn}\\
c_{n}  &  =\left\{
\begin{array}
[c]{ll}%
\left(  n+1\right)  \left(  4L-n\right)  /2 & \text{for \ \ }0\leq n\leq2L\\
10L^{2}-6Ln+L+n^{2} & \text{for \ \ }2L+1\leq n\leq3L\\
c_{6L-n} & \text{for \ \ }3L+1\leq n\leq6L
\end{array}
\right.  \ . \label{Cn}%
\end{align}
The determination of the $m_{n}$s is just a matter of enumerating the ways to
obtain a particular intercalation of $A$ among the $B$s. \ 

Consider in more detail some of the calculations involved. \ As a first step,
with the implicit antisymmetrization \cite{Footnote2}, the internal brackets
$\left[  B_{1}\cdots B_{2L+1}\right]  $ or $\left[  B_{2L+1}\cdots
B_{4L+1}\right]  $ may be supplanted by products: \ $\left[  B_{1}\cdots
B_{2L+1}\right]  =\left(  2L+1\right)  !~\left(  B_{1}\cdots B_{2L+1}\right)
$ or $\left[  B_{2L+1}\cdots B_{4L+1}\right]  =\left(  2L+1\right)  !~\left(
B_{2L+1}\cdots B_{4L+1}\right)  $. \ Then we may write, on the one hand,%
\begin{align}
&  \left[  \left[  AB_{1}\cdots B_{2L}\right]  \left[  B_{2L+1}\cdots
B_{4L+1}\right]  B_{4L+2}\cdots B_{6L}\right] \nonumber\\
&  =-\left(  2L+1\right)  !\left[  \left[  AB_{1}\cdots B_{2L}\right]
B_{4L+2}\cdots B_{6L}\left(  B_{2L+1}\cdots B_{4L+1}\right)  \right]
\end{align}
In this expression, we may now rename indices, bearing in mind the
antisymmetrization.%
\begin{align}
&  \left[  \left[  AB_{1}\cdots B_{2L}\right]  \left[  B_{2L+1}\cdots
B_{4L+1}\right]  B_{4L+2}\cdots B_{6L}\right] \nonumber\\
&  =\left(  2L+1\right)  !\left[  \left[  AB_{2L}\cdots B_{4L-1}\right]
B_{1}\cdots B_{2L-1}\left(  B_{4L}\cdots B_{6L}\right)  \right]  \ .
\label{m(1)FirstStep}%
\end{align}
Next, we apply Lemma 2 for $J=2L-1$, and identify $\left[  AB_{2L}\cdots
B_{4L-1}\right]  $ with $\mathcal{A}$, and $\left(  B_{4L}\cdots
B_{6L}\right)  $ with $\mathcal{Z}$. \
\begin{subequations}
\begin{align}
\left[  \mathcal{A}B_{1}\cdots B_{2L-1}\mathcal{Z}\right]   &  =\left(
2L-1\right)  !\sum_{j=0}^{2L-1}\left(  -1\right)  ^{j}\sum_{k=0}%
^{2L-1-j}\left(  -1\right)  ^{k}B_{1}\cdots B_{k}\mathcal{A}B_{k+1}\cdots
B_{2L-1-j}\mathcal{Z}B_{2L-j}\cdots B_{2L-1}\label{ZRight}\\
&  -\left(  2L-1\right)  !\sum_{j=0}^{2L-1}\left(  -1\right)  ^{j}\sum
_{k=0}^{2L-1-j}\left(  -1\right)  ^{k}B_{1}\cdots B_{k}\mathcal{Z}%
B_{k+1}\cdots B_{2L-1-j}\mathcal{A}B_{2L-j}\cdots B_{2L-1}\ . \label{ZLeft}%
\end{align}
\ To continue, consider first the coefficients $m_{n}^{\left(  1\right)  }$
where $n\leq2L$. \ 

For the determination of $m_{n\leq2L}^{\left(  1\right)  }$, since
$\mathcal{Z}$ consists of $\left(  2L+1\right)  $ $B$s, it \emph{must} be
placed to the \emph{right} of $\mathcal{A}$ in the application of Lemma 2.
\ Otherwise there would be too many $B$s to the left of $A$. \ Thus for
$m_{n\leq2L}^{\left(  1\right)  }$ we need keep only the first line in the
last relation, (\ref{ZRight}). \ To place a total of $n$ $B$s to the left of
the $A$ contained in $\mathcal{A=}\left[  AB_{2L}\cdots B_{4L-1}\right]  $,
with $k$ $B$s already to the left as in (\ref{ZRight}), we then need only the
terms in $\mathcal{A}$\ with an additional $\left(  n-k\right)  $ $B$s to the
left of $A$. \ That is to say, from Lemma 1, with $J=2L$ and all $B$ indices
shifted up by $2L-1$,
\end{subequations}
\begin{equation}
\mathcal{A=}\left[  AB_{2L}\cdots B_{4L-1}\right]  =\left(  2L\right)
!\sum_{l=0}^{2L}\left(  -1\right)  ^{n}B_{2L}\cdots B_{l+2L-1}AB_{l+2L}\cdots
B_{4L-1}\ ,
\end{equation}
and from this we need only the term with $l=n-k$. \ The net result for
$m_{n\leq2L}^{\left(  1\right)  }$ is
\begin{equation}
m_{n\leq2L}^{\left(  1\right)  }=\left(  2L+1\right)  !\left(  2L\right)
!\left(  2L-1\right)  !\times c_{n\leq2L}\ ,
\end{equation}%
\begin{equation}
c_{n\leq2L}=\left.  \sum_{j=0}^{2L-1}\sum_{k=0}^{2L-1-j}\sum_{l=0}^{2L}%
\delta_{l,n-k}\right\vert _{n\leq2L}=\sum_{j=0}^{2L-1}~\sum_{k=0}^{\min\left(
n,2L-1-j\right)  }1=\frac{\left(  n+1\right)  \left(  4L-n\right)  }{2}\ .
\label{cn<=2L}%
\end{equation}

On the other hand, with similar steps, we have%
\begin{align}
&  \left[  \left[  A\left[  B_{1}\cdots B_{2L+1}\right]  B_{2L+2}\cdots
B_{4L}\right]  B_{4L+1}\cdots B_{6L}\right] \nonumber\\
&  =\left(  2L+1\right)  !\left[  \left[  AB_{1}\cdots B_{2L-1}\left(
B_{2L}\cdots B_{4L}\right)  \right]  B_{4L+1}\cdots B_{6L}\right]  \ .
\label{m(2)FirstStep}%
\end{align}
We again apply Lemma 2 for $J=2L-1$, but to $\left[  AB_{1}\cdots
B_{2L-1}\left(  B_{2L}\cdots B_{4L}\right)  \right]  $, so now we identify $A$
with $\mathcal{A}$, and $\left(  B_{2L}\cdots B_{4L}\right)  $ with
$\mathcal{Z}$. \ As before, consider first only $m_{n}^{\left(  2\right)  }$
coefficients where $n\leq2L$. \ For the determination of $m_{n\leq2L}^{\left(
2\right)  }$, $\mathcal{Z}$ \emph{must} once again be placed to the
\emph{right} of $\mathcal{A}$, so we need keep only the line (\ref{ZRight}).
\ We pick up an additional $\left(  n-k\right)  $ $B$s by applying again Lemma
1, only this time to the remaining \emph{outside} bracket in
(\ref{m(2)FirstStep}). \ The net result for $m_{n\leq2L}^{\left(  2\right)  }$
is%
\begin{equation}
m_{n\leq2L}^{\left(  2\right)  }=\left(  2L+1\right)  !\left(  2L\right)
!\left(  2L-1\right)  !\times c_{n\leq2L}\ ,
\end{equation}
with exactly the same expression for $c_{n\leq2L}$ as before, (\ref{cn<=2L}).
\ Thus we have shown $m_{n\leq2L}^{\left(  1\right)  }=m_{n\leq2L}^{\left(
2\right)  }$.

Next, consider the coefficients where $2L+1\leq n\leq3L$. \ There are still
contributions to either $m_{n}^{\left(  1\right)  }$ or $m_{n}^{\left(
2\right)  }$ from the line (\ref{ZRight}), as above, of the form $\left(
2L+1\right)  !\left(  2L\right)  !\left(  2L-1\right)  !\times c_{n}$, and
these contributions to either $m_{n}^{\left(  1\right)  }$ or $m_{n}^{\left(
2\right)  }$\ still turn out to be the same. But in this case the sums
contributing to $c_{n}$ give%
\begin{equation}
\left.  \sum_{j=0}^{2L-1}\sum_{k=0}^{2L-1-j}\sum_{l=0}^{2L}\delta
_{l,n-k}\right\vert _{2L+1\leq n\leq3L}=\sum_{j=0}^{2L-1-\left(  n-2L\right)
}\sum_{k=\left(  n-2L\right)  }^{2L-1-j}1=\frac{1}{2}\left(  4L+1-n\right)
\left(  4L-n\right)  \ .
\end{equation}
Moreover, from applying Lemma 2, there are now contributions to either
$m_{n}^{\left(  1\right)  }$ or $m_{n}^{\left(  2\right)  }$ from the second
line, (\ref{ZLeft}), where the respective $\mathcal{Z}$s are placed to the
\emph{left} of the $\mathcal{A}$s. \ Following steps similar to those above,
it is not difficult to see that these other terms contribute the \emph{same
amount} to either $m_{n}^{\left(  1\right)  }$ or $m_{n}^{\left(  2\right)  }%
$, for $2L+1\leq n\leq3L$. \ Namely, $\left(  2L+1\right)  !\left(  2L\right)
!\left(  2L-1\right)  !\times$%
\begin{equation}
\left.  \sum_{j=0}^{2L-1}\sum_{k=0}^{2L-1-j}\sum_{l=0}^{2L}\delta
_{l,n+j-4L}\right\vert _{2L+1\leq n\leq3L}=\sum_{j=4L-n}^{2L-1}\sum
_{k=0}^{2L-1-j}1=\frac{1}{2}\left(  n-2L+1\right)  \left(  n-2L\right)  \ .
\end{equation}
Thus the net result is $m_{2L+1\leq n\leq3L}^{\left(  1\right)  }=m_{2L+1\leq
n\leq3L}^{\left(  2\right)  }=\left(  2L+1\right)  !\left(  2L\right)
!\left(  2L-1\right)  !\times c_{2L+1\leq n\leq3L}$ with%
\begin{equation}
c_{2L+1\leq n\leq3L}=\frac{1}{2}\left(  4L+1-n\right)  \left(  4L-n\right)
+\frac{1}{2}\left(  n-2L+1\right)  \left(  n-2L\right)  =10L^{2}%
-6Ln+L+n^{2}\ .
\end{equation}

Finally, consider the coefficients for $3L+1\leq n\leq6L$. \ These are given
by an elementary reflection symmetry: \ $m_{n}^{\left(  1\right)  }%
=m_{6L-n}^{\left(  1\right)  }$ and $m_{n}^{\left(  2\right)  }=m_{6L-n}%
^{\left(  2\right)  }$. \ Thus $m_{n}^{\left(  1\right)  }=m_{n}^{\left(
2\right)  }=\left(  2L+1\right)  !\left(  2L\right)  !\left(  2L-1\right)
!\times c_{6L-n}$ for $3L+1\leq n\leq6L$.
\end{proof}

As a check, the coefficients must sum to give the number of generic terms that
appear in three nested $\left(  2L+1\right)  $-brackets (i.e. in either
$\left[  \left[  \left[  \cdots\right]  \cdots\right]  \cdots\right]  $ or
$\left[  \left[  \cdots\right]  \cdots\left[  \cdots\right]  \right]  $).
\ That is, $\sum_{n=0}^{6L}m_{n}=\left(  \left(  2L+1\right)  !\right)  ^{3}$.
\ Equivalently,
\begin{equation}
\sum_{n=0}^{6L}c_{n}=2L\left(  2L+1\right)  ^{2}\text{ .} \label{Check}%
\end{equation}
This condition is indeed satisfied by the $c_{n}$ given in (\ref{Cn}).

\section{Conclusion}

Perhaps N-brackets and algebras have an important role to play in physics, as
originally suggested by Nambu. \ Recently there has been considerable interest
in $N$-brackets, especially $3$-brackets, as expressed in the physics
literature (see \cite{CFJMZ}\ and references therein). \ These ideas await
further development.

\begin{acknowledgments}
We thank David Fairlie and Cosmas Zachos for sharing their thoughts about
Nambu brackets. \ We also thank the referee for comments which significantly
improved the exposition of this paper. \ One of us (TC) further thanks the
Lago Mar Resort for providing the beautiful and stimulating surroundings where
portions of this work were completed. \ This work was supported by NSF Awards
0555603 and 0855386.
\end{acknowledgments}

\end{document}